\documentclass[11pt,reqno]{amsart}
\usepackage{amssymb,mathrsfs}
\usepackage{amsaddr}
\usepackage{graphicx,epsfig}
\usepackage[hypertex]{hyperref}
\usepackage[numbers,sort&compress]{natbib}
\usepackage{enumitem}
\usepackage{MnSymbol}

\voffset=-5mm
\hoffset=-15mm
\textheight=215mm
\textwidth=145mm

\newcommand{\ket}[1]{\mathop{|#1\rangle}\nolimits}

\newcommand{\kbr}[2]{| #1\rangle\!\langle #2 |}

\newcommand{\scri}{{\mathscr I}}

\newcommand*{\at}{@}

\newcommand{\ra}{\rangle}
\newcommand{\la}{\langle}
\newcommand{\be}{\begin{equation}}
\newcommand{\ee}{\end{equation}}

\newtheorem{thm}{Theorem}

\newtheorem{lem}[thm]{Lemma}

\tabcolsep=3pt

\def\bbC{\mathbb{C}}

\def\bbR{\mathbb{R}}

\newcommand{\nn}{\nonumber}
\def\wh{\widehat}
\def\dg{\dagger}
\def\df{\overset{\rm df}{=}}
\newcommand{\Tr}[1]{\mathop{{\mathrm{Tr}}_{#1}}}
\newcommand{\Cl}[1]{\mathcal{C}l_{1,#1}}
\newcommand{\whCl}[1]{\wh{\mathcal{C}l}_{1,#1}}
\newcommand{\id}{\mathop{{\mathrm{id}}}\nolimits}

\def\a{\alpha}
\def\b{\beta}
\def\g{\gamma}

\def\e{\epsilon}

\def\vr{\varrho}
\def\vp{\varphi}

\def\om{\omega}
\def\Om{\Omega}
\def\s{\sigma}

\def\om{\omega}
\def\Om{\Omega}
\def\gom{r_\omega}
\def\gom{g_\omega}

\def\N{\mathcal{N}}
\def\E{\mathcal{E}}

\def\D{\mathcal{D}}

\def\M{\mathcal{M}}
\def\T{\mathcal{T}}

\begin{document}
\title{The capacity of black holes to transmit quantum information}

\begin{abstract}
    We study the properties of the quantum information transmission channel that emerges from the quantum dynamics of particles interacting with a black hole horizon.   We calculate the quantum channel capacity in two limiting cases where a single-letter capacity is known to exist: the limit of perfectly reflecting and perfectly absorbing black holes. We find that the perfectly reflecting black hole channel is closely related to the Unruh channel  and that its capacity is non-vanishing, allowing for the perfect reconstruction of quantum information outside of the black hole horizon. We also find that the complementary channel (transmitting entanglement behind the horizon) is entanglement-breaking in this case, with vanishing capacity. We then calculate the quantum capacity of the black hole channel in the limit of a perfectly absorbing black hole and find that this capacity vanishes, while the capacity of the complementary channel is non-vanishing instead. Rather than inviting a new crisis for quantum physics, this finding instead is in accordance with the quantum no-cloning theorem, because it guarantees that there are no space-like surfaces that contain both the sender's quantum state and the receiver's reconstructed quantum state.
\end{abstract}

\keywords{Black holes, Information loss problem, Quantum information, Quantum capacity, Quantum Shannon theory, Cloning channels}

\author{Kamil Br\'adler}
\email{kbradler\at ap.smu.ca}
\address{
    Department of Astronomy and Physics,
    Saint Mary's University,
    Halifax, Nova Scotia, B3H 3C3, Canada
    }
\author{Christoph Adami}
\address{Department of Physics and Astronomy, Michigan State University, East Lansing, MI 48824}

\maketitle

\section{Introduction}

The black hole information paradox has remained at the forefront of theoretical physics through close to 40 years, ever since Hawking discovered the eponymous radiation~\cite{hawking1975particle,Wald1994}. The information paradox comes in many shapes and forms, and what is considered paradoxical about black holes has changed significantly throughout history~\cite{Mathur2009,mathur2011information,gomberoff2005lectures,papantonopoulos2009physics}. In the past, even the idea that quantum black holes cause pure states to evolve into mixed states was considered paradoxical, but the advent of quantum information theory~\cite{nielsen2010quantum,wilde2011classical} has established that pure states very naturally evolve into mixed states if part of the Cauchy surface is traced over. More modern discussions of the black hole paradox discuss the fate of classical and quantum information when interacting with the black hole horizon. As the black hole is a quantum object, the use of the modern tools of quantum information theory are inevitable. However, none of these tools existed when the fate of information interacting with black holes was first discussed.

We propose that the fate of both classical and quantum information should be studied using the language of quantum Shannon theory,  the subdiscipline of quantum information theory  dealing with the mathematical aspects of information transmission, storage and retrieval~\cite{holevo2012quantum}. We argue that a resolution of the alleged paradox does not lie in our (incomplete) understanding of quantum gravity (as is often speculated), but rather in posing the problem correctly within an appropriate formalism dealing with information. Indeed, as the recent upswell of interest in the so-called firewall problem~\cite{almheiri2013black} (see also~\cite{braunstein2013better}) shows, the attempts to solve it are increasingly focused on the quantum information aspect of the problem, see for example~\cite{larjo2012black,giddings2012nonviolent,avery2012unitarity,nomura2013complementarity,papadodimas2013state}. While the firewall paradox is close to the subject matter we discuss here, will be addressed directly in a separate publication~\cite{adamibradler}.

Whether or not quantum black holes destroy {\em classical information} (that is, classical information carried by quantum states~\cite{holevo2012quantum}) has been answered negatively in terms of quantum information theory by Adami and ver Steeg~\cite{AdamiVersteeg2013}, who showed that the classical capacity of the black hole (the Holevo capacity~\cite{holevo2012quantum}) is positive. Indeed, these authors showed that the classical information sent into a black hole is not contained within Hawking radiation (as is often suspected), and therefore cannot be retrieved from it. Instead, the information resides in the {\em stimulated} emission of radiation that is unavoidable if quanta are absorbed by a black body, and can be reconstructed from this radiation with perfect accuracy if appropriate error protection measures are taken (this is necessary to recover information in any noisy channel). However, knowing the fate of classical information does not immediately shed light on what happens to {\em quantum} information interacting with a black hole, because quantum information has very different properties.

Quantum information refers to the entanglement state of a sender A (conventionally referred to as ``Alice"). In a quantum transmission channel, Alice intends to transmit her entanglement with a reference system to a receiver B (commonly called ``Bob") so that after the transmission Bob is entangled with the reference in exactly the same way as Alice was initially. Alice cannot retain this entanglement after transmission, because (as we will discuss in more detail) this would violate the celebrated quantum no-cloning theorem~\cite{wootters1982single,dieks1982communication}.

Hayden and Preskill  were first to cast the problem of {\em quantum information} interacting with black holes in terms of quantum Shannon theory~\cite{hayden2007black} (also see~\cite{smolin2006locking}), and discussed precisely the scenario we just outlined: how entanglement between Alice and a reference system could be transferred to Bob after it had interacted with a black hole. Contrary to their approach (that is similar in spirit, but differs in detail from the model of black hole evaporation advocated by Page~\cite{page1993information}) we describe the black hole as an open quantum system whose purifying (reference) quantum system is not accessible, and may as well be purely formal. As such, the black hole is described by a quantum channel, while its evolution is perfectly unitary.

Our first task will be to identify the nature of the channel, and then calculate the quantum capacity in those cases where this is possible today. Only then can we ask precise questions about the nature of the quantum channel: how much quantum information can be recovered from a black hole at future infinity, and how much quantum information enters the black hole.  Quantum Shannon theory gives a precise operational meaning to the notion of quantum information and quantifies it in terms of the amount of entanglement that can be transferred (from Alice to Bob) by the channel. This quantity is the {\em quantum channel capacity}.

Here we calculate the capacity of the black hole to transmit quantum information (in the form of shared quantum entanglement) {\em through} a black hole, in the semiclassical picture of quantum gravity.
We show that the capacity to reconstruct a quantum state depends on how well the black hole {\em reflects} it, and calculate the capacity explicitly in the limit of a perfectly absorbing black hole, as well as a perfectly reflecting black hole. We show that in neither case is it possible to reconstruct the quantum state accurately both outside and inside the horizon, in accordance with the no-cloning theorem of quantum mechanics. Our findings also imply that the loss of quantum information ought not to be viewed as a breakdown of quantum mechanics, and argue that the apparent evolution of pure states into mixed states is described by {\em open system dynamics} (namely, quantum channels) and is in accord with standard quantum mechanics.

The material we present is organized as follows.  In Sec.~\ref{sec:BHchannel} we review the definition of our main tool---the quantum channel capacity. We then describe the black hole channel (along with its relation to the Unruh channel) and review the notions of complementary and degradable channels. In Sec.~\ref{sec:Sorkin}, we introduce the black hole channel proper in terms of a construction due to Sorkin~\cite{sorkin1987simplified}, who discovered how to consistently extend the standard formalism of semi-classical quantum gravity to black holes with arbitrary reflectivity\footnote{Hawking introduced grey-body factors in his original derivation, however this formulation was not consistent with Einstein's standard results concerning the quantum theory of radiation, as discussed in~\cite{bekenstein1977einstein})}. This formulation allows us to discuss the fate of quantum information incident on an already formed black hole. We discuss the quantum channel in the limit of perfectly reflecting as well as perfectly absorbing black holes, followed by conclusions and an appendix containing a technical lemma needed in Sec.~\ref{sec:Sorkin}.

\section{Quantum channels and quantum capacity}
\label{sec:BHchannel}

Before we identify the precise nature of the quantum black hole channel and quantify how much quantum information can be recovered after interacting with a black hole, let us first illustrate the methods of quantum Shannon theory by showing why no information can be hidden in the outgoing Hawking radiation~\cite{hawking1975particle}.

That Hawking radiation is featureless (and because of this cannot carry the imprint of the quantum states that have interacted with the black hole) is on the one hand well known, yet disputed as a matter of principle.  We will confirm that Hawking radiation is featureless using the formalism of quantum information theory (quantum Shannon theory in particular) because this is the formalism we will be using in a channel construction that goes beyond Hawking's standard results. A reader less familiar with the basic concepts of quantum information theory is encouraged to consult Refs.~\cite{nielsen2010quantum,holevo2012quantum,wilde2011classical} for a rigorous introduction to quantum Shannon theory.
\begin{figure}[t]
  \resizebox{14cm}{!}{\includegraphics{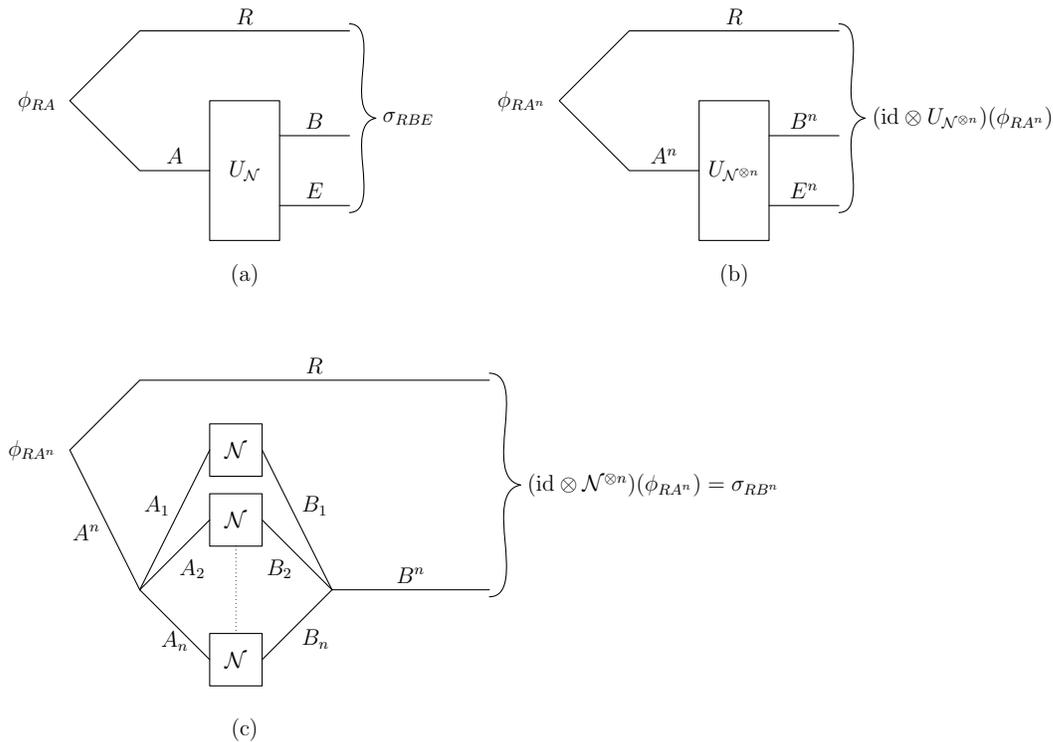}}
   \caption{(a): States and subsystem used for the calculation of the one-shot optimized coherent information Eq.~(\ref{eq:OptCohInfo}). An input pure state $\phi_{RA}$ is unitarily transformed by the isometry $U_\N$ representing a noisy quantum channel $\N:\phi_A\mapsto\s_B$. The $E$ subsystem is usually called a complementary output to $B$. (b): the same situation, but this time with $n$ copies of a quantum channel $\N$ where  the maximization in Eq.~(\ref{eq:OptCohInfo}) should be performed over the input state $\phi_{RA^n}$  to be used for the quantum capacity calculation Eq.~(\ref{eq:QuantCap}). This often becomes  intractable as $n$ increases. The state $\phi_{RA^n}$ is called a quantum codeword and is itself an output of another map called an encoder $\E$ as indicated in Eq.~(\ref{eq:map}). The output state $\Tr{RE^n}[(\mathrm{id}\otimes U_{\mathcal{N}^{\otimes n}})(\phi_{RA^n})]$ is then sent to the decoder $\D$. The symbol $\id$ stands for a noiseless (identity) channel and $S^n$ is an abbreviation for an $n$-partite system $S_1S_2\dots S_n$. (c): a more detailed version of the figure from (b), but with the complementary channel outputs suppressed.}
    \label{fig:cohinfo}
\end{figure}
Our discussion of black holes in this paper will be entirely within the semiclassical framework, meaning that we consider (as usual) macroscopic Schwarzschild black holes where the effects of backreaction (the influence of the black hole on the metric field surrounding it) as well as recoil (that is, momentum conservation) are neglected. As is furthermore customary, we assume that the black hole emits a scalar massless field only~\cite{BirrellDavies1982}, and neglect the gravitational redshift (while pointing out important places where including the redshift is expected to make a qualitative difference). While these simplifications result in a caricature of a physical black hole~\cite{page1976particle}, we do not expect that relaxing them, or considering other types of quantum fields, would change our conclusions in a qualitative manner (as long as  superselection rules for higher-spin fields are properly taken into account~\cite{bradler2011capacities}).

In order to determine how much information is encoded within Hawking radiation alone, we should calculate the {\em quantum capacity}~\cite{lloyd1997capacity,shor2002quantum,devetak2005capacity,hayden2008decoupling,klesse2008random}
of the {\em quantum channel}~$\N$. The term ``quantum channel" is synonym for a completely positive map that represents a general noisy evolution of positive operators (density matrices) in quantum mechanics~\cite{nielsen2010quantum}. In quantum information theory we ask how the sender (Alice) can transmit her quantum information reliably to the receiver (Bob) through a noisy quantum channel so that after Bob receives the output of the channel, his entanglement is precisely the same as Alice's was. This question is unambiguously answered using the concept of quantum capacity~\cite{holevo2012quantum}. If the quantum channel $\N$ is not too noisy (this statement can be made precise) then there exist two other quantum maps called an encoder $\E$ and decoder $\D$ such that the composite channel
\begin{equation}\label{eq:map}
\D\circ\N^{\otimes n}\circ\E
\end{equation}
is arbitrarily close to a noiseless  map (the identity operator) under a suitable norm called the diamond norm\footnote{\url{https://cs.uwaterloo.ca/~watrous/LectureNotes.html}}.

The existence of the encoder
and decoder is equivalent to the existence of an error-correcting code enabling the participants to
communicate in an error-free manner in the asymptotic limit of many copies of the channel~$\N$~\cite{shor1995scheme,gottesman1997stabilizer}. The quantum capacity is then the maximum of the ratio of the number of qubits we intend to send to the number of qubits the encoder
generates, to make sure that a given quantum system can be faithfully decodable (that is, correctable) by the recipient.
Thus, the quantum capacity is a number between zero and one, with units (qu)bits. It turns out
that the dimensionality of the Hilbert space available for the error-free transmission is approximately $2^{nQ(\N)}$, where the exponent is the quantum capacity given by
\begin{equation}\label{eq:QuantCap}
Q(\N)=\lim_{n\rightarrow\infty}{\frac{1}{n}}Q^{(1)}(\N^{\otimes n}).
\end{equation}
The quantity $Q^{(1)}(\N)$ is called the {\em optimized coherent information}~\cite{barnum1998information} defined as
\begin{equation} \label{eq:OptCohInfo}
Q^{(1)}(\N)\df\max_{\phi_{RA}}I_c(\N)=\max_{\phi_{RA}}\big[H(B)_{\s}-H(E)_{\s}\big].
\end{equation}
$H(S_{1})_{\s}$ is the von Neumann entropy of the $S_{1}$ subsystem of an $n$-partite quantum state $\s_{S_{1}S_{2}\dots S_n}$, we implicitly define $H(B)_\sigma$ and $H(E)_\sigma$ on the state $\s_{RBE}=(\id\otimes U_\N)(\phi_{RA})$ in Eq.~(\ref{eq:OptCohInfo}),
which is the output of an isometric extension $U_{\N}$ of the quantum channel $\N$. $B$ refers to the Hilbert space of the receiver (Bob in our case), while $E$ refers to the (unmeasured) environment and $R$ is the purifying system such that $\s_{RBE}$ is a pure state (see Fig.~\ref{fig:cohinfo}). The maximization is performed over all possible entangled states $\phi_{RA}$.

As is by now well known, the calculation of the quantum capacity for an arbitrary
quantum channel is an intractable problem~\cite{holevo2012quantum,wilde2011classical}. However,
some classes of quantum channels exist for which the quantum capacity can be calculated nonetheless, by showing that the regularization in Eq.~(\ref{eq:QuantCap}) is unnecessary (such channels are said to have ``single-letter" quantum capacity formulas).
One such class is the symmetric quantum channel~\cite{smith2008quantum}, whose quantum capacity is provably zero~\cite{smith2008quantum}.
The symmetric quantum channel is a very special case of a broader class of channels called {\em degradable} channels that, perhaps surprisingly, play an important role in black hole quantum information theory. We will return to them later in this section.

As it turns out, the quantum channel whose output is thermal Hawking radiation is symmetric. To see this, we inspect the channel isometry
\begin{equation}\label{eq:HawkIs}
V(r_\om)=\prod_{\om}\exp{\big[r_\om(a_k^\dg b_{-k}^\dg-a_kb_{-k})\big]},
\end{equation}
where $r_\om$ is related to the mode frequency $\omega$ and the surface gravity $\kappa=1/2M$ of the black hole (where $M$ is its mass) via $\exp{(-\pi\om/\kappa)}=\tanh{r_{\om}}$.
In (\ref{eq:HawkIs}), $a_k$ annihilates the receiver's vacuum ($B$) while $b_k$ annihilates vacuum states beyond the black hole horizon ($E$). Here, all the mode information is collected in the momentum index $k$ and a dispersion relation holds: $\om=\om(\pm k)$. There is no need to distinguish between continuum and discrete normalization states.

The product in (\ref{eq:HawkIs}) is not infinite (we assume low- and high-frequency cut-offs whose motivation is now well understood~\cite{unruh1976notes,jacobson1993black}). In fact, a finite-product form carries a number of important advantages. By acting on an initial vacuum state (Alice's) we find
\begin{equation}\label{eq:HawkIso}
V(r_\om)\ket{\rm vac}=\prod_{\om}{\frac{1}{\cosh{r_{\om}}}}\sum_{n=0}^{\infty}
\tanh^{n}{r_{\om}}\ket{n}_B\ket{n}_E=\prod_{\om}\s_{\om,BE}
\end{equation}
where we set $B\equiv k$ and $E\equiv -k$, to connect with the capacity formulas introduced above. Because Eq.~(\ref{eq:HawkIso}) is in product form, we can focus on the single two-mode state $\s_{\om,BE}$ only. With
$\vr^{\rm th}_{\om,B}={\rm Tr}_E(\s_{\om,BE})$ and similarly for $\vr^{\rm th}_{\om,E}$,
the isometry (\ref{eq:HawkIs}) induces a quantum channel whose output density matrix reads
\begin{equation}\label{eq:HawkDensityMatrix}
\prod_{\om}\vr^{\rm th}_{\om,B}\df\prod_{\om}\big(1-e^{-{\frac{2\pi\om}{\kappa}}}\big)\sum_{n=0}^{\infty}
e^{-{\frac{2\pi n\om}{\kappa}}}|n\ra_B\la n|
=\prod_{\om}{\frac{1}{\cosh ^{2}{r_{\om}}}}\sum_{n=0}^{\infty}\tanh^{2n}{r_{\om}}|n\ra_B\la n|
\end{equation}
where ``th" stands for thermal. It is easy to see that $\vr^{\rm th}_{\om,B}=\vr_{\om,E}^{\rm th}$. This is sufficient to conclude that the corresponding quantum channel is symmetric and that its quantum capacity therefore necessarily vanishes~\cite{smith2008quantum}. Furthermore, the overall channel output $\vr_B^{\rm th}=\prod_{\om}\vr^{\rm th}_{\om,B}$ is in a product form and therefore the resulting multi-mode quantum channel is again symmetric. So as advertised, no quantum information can ever be reconstructed from Hawking radiation: the capacity to transmit quantum information via Hawking radiation vanishes.

To some extent, the result of the above calculation is hardly surprising: after all, the input Fock space in this channel is one-dimensional, and such a Hilbert space cannot be used to transmit information. Indeed, this is completely analogous to classical information theory: we need at least two degrees of freedom (for example, two voltage levels) to be able to transmit a physical bit.

What would happen if we attempted to increase the dimension of Fock space by viewing $\ket{\rm vac}$ as a logical zero $\ket{0}$ and consider the single photon $\ket{1}_k$ as the logical one $\ket{1}$? Doing this transforms the isometry into a different one-- thus inducing a completely different quantum channel. This is a reminder that particles alone (whether bosons or fermions) cannot by themselves be considered qubits because a particle alone does not specify the basis states that are necessary to construct the Bloch sphere representation of the qubit. Indeed, while most qubits are based on bosonic or fermionic states, the correspondence between particle states and qubits must be justified in terms of a low-level encoding that maps the logical $|0\ra$ and $|1\ra$ to physical particle states. Below we will revisit the isometry Eq.~(\ref{eq:HawkIs}) under less trivial circumstances, and a methodical approach rooted in quantum Shannon theory will help us to sharpen the information loss paradox and guide us toward its resolution.

For now we are led to the following conclusions: If quantum information is ever to escape the evaporating
black hole, the outgoing radiation needs to either
\begin{itemize}
 \itemsep0.5em
   \item[$\pentagram$] display some form of inter-mode entanglement within the output multi-mode entangled state, or
   \item[$\pentagram$] exhibit {\em non-thermal} corrections in the two-mode output state.
 \end{itemize}
Both scenarios are necessary but not sufficient to obtain a nonzero quantum capacity.

In this work, we will find that the second scenario is realized, while inter-mode entanglement (quantum correlations across modes with different $|k|$) does not appear natural given the physics involved. Indeed, Eq.~(\ref{eq:HawkIso}) guarantees that modes do not interact. Nonetheless, such approaches trying to reconcile information preservation with black hole dynamics have been tried before~\cite{giddings2012nonviolent,braunstein2007quantum}.

\subsection*{Black holes as quantum transmission channels}

The quantum channel in the presence of black holes was previously studied by Hayden and Preskill in~\cite{hayden2007black}. Let us consider two observers, Alice (the sender of quantum information) and Bob (the receiver). Alice sends her quantum state into the Schwarzschild black hole and Bob collects the radiation at future null infinity. His goal is to reconstitute the quantum states sent by Alice with arbitrary accuracy as measured by an operationally justified figure of merit (the quantum fidelity~\cite{nielsen2010quantum}). If this is possible, then quantum information processing by a black hole is manifestly unitary. If this is not possible in principle, then quantum black holes hide quantum information as long as the black hole horizon is present, but it does not imply a violation of unitarity. Whether black holes {\em destroy} quantum information can only be ascertained if we were able to describe the entangled system after the black hole has fully evaporated. Hayden and Preskill realized that the process of thermalization (called scrambling in the high-energy jargon) resembles a random code construction~\cite{hayden2008decoupling,klesse2008random,devetak2005capa}, which would imply automatic protection of quantum information thrown into black hole. We follow a similar path but contrary to~\cite{hayden2007black} we do not assume that this procedure could preserve quantum information under all circumstances. The main reason is that the same randomization operation producing a thermal state when applied to a vacuum {\em fails to generate a thermal state for an incoming $n$-photon state}\footnote{For a finite-dimensional Hilbert space $\bbC^d$, the encoding unitary operator modelling the thermalization is chosen randomly according to the Haar measure for the unitary group $U(d)\otimes U(d)$~\cite{hayden2008decoupling}. In an infinite-dimensional Hilbert space of a real scalar massless field, two options are possible: (i) to cut-off the Hilbert space and make it $\bbC^d$ for $d\gg0$ or (ii) to generalize the randomization procedure to an infinite-dimensional Hilbert space. In certain special instances of the latter case~\cite{harrington2001achievable,holevo2012quantum} it is indeed possible that if the incoming states are Gaussian, then Gaussian random codes (an ensemble of coherent states for example) are synonymous with thermalization and can be used to achieve the quantum capacity given by the coherent information~Eq.~(\ref{eq:OptCohInfo}). However, it is not known whether the Gaussian random codes are able to achieve nonzero rates for incoming Fock states (or other non-thermal quantum states) used to send quantum information in our scenario. The black hole is certainly not a thermalization machine because as we will see in Sec.~\ref{sec:Sorkin}, the scrambled state is simply not Gaussian (and so it is non-thermal).}. Thus, it is not immediately obvious that randomization is a good encoder (it is certainly true in the finite-dimensional qubit model studied in~\cite{hayden2007black} but not necessarily in a more realistic, infinite-dimensional Hilbert space we consider here). Instead, we consider this operation to be a quantum channel, for a detailed discussion see Eq.~(\ref{eq:map}).

Clearly, transfer of entanglement is not the experiment proposed by Hawking to illustrate the failure of quantum mechanics~\cite{hawking1975particle} (or even a reformulated setup to illustrate the loss of information as mentioned in the introduction). To some extent this is inevitable because the concept of quantum information did not exist at the time.
A simple thought experiment describes a basic quantum-information theoretic scenario that gives rise to an apparent entropy production. Suppose we prepare a set of pure states and then apply to it a random unitary operation, with a probability chosen from a given probability distribution. If the receiver of the quantum state is unaware of which unitary has been applied, she cannot reconstruct the initial quantum state with perfect accuracy. This is a perfectly unitary operation of course, and the resulting mixed state can be (non-uniquely~\cite{nielsen2010quantum}) purified using an appropriate ancilla (reference or auxiliary state). In fact, Bekenstein~\cite{bekenstein1973black} speculated that the source of the black hole entropy is an ensemble average over the many ways a black hole can be prepared. This is, mathematically speaking, equivalent to a series of random unitary operators applied to an initial quantum state described above. This entropy-producing series can be seen as analogous to the black hole evaporation process and (as emphasized by Zurek in~\cite{zurek1982entropy}) this mapping must be described by a superoperator and not a unitary operator. This is correct, of course,  since Zurek's notion of a superoperator is mathematically equivalent to what has later been termed a quantum channel (a completely positive map)~\cite{nielsen2010quantum}.

While the description of black hole evaporation as a series of random unitary operators applied to an unknown initial quantum state is clearly simplistic, we described this gedankenexperiment in order to place the process of black hole formation and evaporation firmly into the realm of quantum Shannon theory. We learn that unitary processes can give rise to channels that are described by superoperators, that entropy production is natural in such channels, and that an apparently mixed state (after evaporation of the black hole) should not be used as evidence of the failure of quantum mechanics.

\subsection*{The Unruh channel}

In the absence of an exact description of the black hole interior we assume here that
the principal physical mechanism accounting for quantum information preservation
is the process of black hole stimulated emission~\cite{bekenstein1993fast,wald1976stimulated,panangaden1977probability,AudretschMueller1992,sorkin1987simplified} (see Fig.~\ref{fig:stimul}). A similar approach has been adopted in~\cite{AdamiVersteeg2013} in the
context of the transmission of classical information via black holes. As is now known (and we review below), the dynamics of stimulated emission from a perfectly reflecting accelerated mirror (in the absence of backscattering) gives rise precisely to the channel known as the Unruh channel previously studied in great detail in~\cite{bradler2009private,bradler2010conjugate,bradler2011infinite,bradler2012rindler} in the context of the Unruh effect~\cite{unruh1976notes} (for a comprehensive review see~Ref.~\cite{takagi1986vacuum} and~\cite{crispino2008unruh}). As we will see below, the same channel arises when considering perfectly reflecting black holes~\cite{AdamiVersteeg2013}.
\begin{figure}
  \resizebox{12cm}{!}{\includegraphics{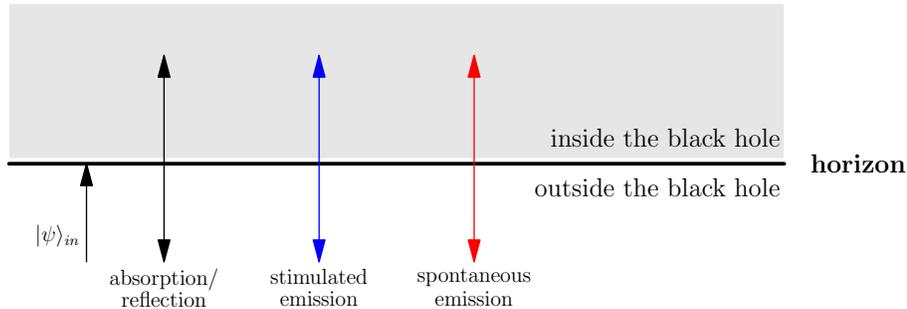}}
   \caption{A schematic description of the black hole's response to a {\em late} incoming state $\ket{\psi}_{in}$ of $n$ photons. The Hawking radiation is responsible for spontaneous emission (red arrows) that is ``modulated" by a potential barrier surrounding the black hole leading to a nonzero reflection or absorption coefficients (black arrows). This is a situation of a black hole interacting with a vacuum ($n=0$). In  case $n>0$, the black hole additionally emits stimulated radiation (blue arrows).}
    \label{fig:stimul}
\end{figure}

The quantum (and classical) information transmission properties of the Unruh channel are fully understood since the respective capacities are calculable in terms of single-letter capacities. Even more interestingly, the structure of the Unruh channel is intimately related to optimal qubit quantum cloners~\cite{bradler2011infinite}. We will return to this peculiarity after we review the properties of the Unruh channel. That the Unruh channel makes an appearance in the  discussion of black holes should not come as a surprise considering the close resemblance between the physical processes of the Unruh effect and black hole evaporation. However, there are also a number of differences and it would be hasty to identify the Unruh channel with the quantum black hole channel in general. Indeed, the full quantum channel is different from the limiting cases we discuss here, and was studied in detail elsewhere~\cite{bradler2014black}. Yet, we will see that in the limit where the black hole is perfectly reflecting incoming radiation, the black hole channel exactly coincides with the Unruh channel.

The Unruh effect~\cite{unruh1976notes} precedes the discovery of Hawking radiation. Originally described by Fulling~\cite{Fulling1973} and independently by Davies~\cite{Davies1975}, the effect Unruh considered concerns the radiation that an accelerated observer perceives when an observer at rest measures the vacuum: the absence of any particles. From a quantum information-theoretic point of view, the Unruh effect can be viewed as a quantum information transmission channel, where the sender is at rest and the receiver is accelerated with respect to the sender.

The formal input of the channel is a quantum state prepared in the laboratory of an inertial (Minkowski) observer. The output of the channel is the quantum state detected by a uniformly accelerating observer whose natural reference frame is described by Rindler coordinates $\xi,\tau$ ($a$ is the Rindler observer's proper acceleration):
\begin{subequations}\label{eq:RindlerCoord}
\begin{align}
  x &= \xi\cosh{\tau a}, \\
  t &= \xi\sinh{\tau a},
\end{align}
\end{subequations}
with the Rindler metric ${\rm d}s^2=-a^2\xi^2{\rm d}\tau^2+{\rm d}\xi^2$. In order to quantize a field using the degrees of freedom available to the non-inertial observer, we use Rindler coordinates Eq.~(\ref{eq:RindlerCoord}) covering separately the right ($\xi>0$) and left wedge ($\xi<0$) even though the Rindler observer is bound to just a single wedge.

The Unruh effect is a consequence of the inequivalent quantization of the field (here, massless scalar bosons) in the respective reference frames. The creation and annihilation field
operators in the inertial and accelerated frames are related by the Bogoliubov transformation:
\begin{equation}\label{eq:SymplecticTrans}
\begin{pmatrix}
b_{\Om}^{R} \\
{b_{\Om}^{L}}^{\dg}
\end{pmatrix}
=
\begin{pmatrix}
\cosh{r_{\Om}} & \sinh{r_{\Om}} \\
\sinh{r_{\Om}} & \cosh{r_{\Om}}
\end{pmatrix}
\begin{pmatrix}
d_{-\Om} \\
{d^{\dg}}_{\Om}
\end{pmatrix},
\end{equation}
where $\Om=\om/a$ is the (rescaled) Minkowski frequency $\om$ labeling the different modes, and
$\tanh{r_\Om}=\exp{(-\pi\Om)}$. The boson operators $d_{\Om}$ annihilate the Minkowski vacuum and $\{d_{\Om},d^{\dg}_{\Om}\}$ satisfy the canonical commutation relations. In Rindler spacetime, there are two sets of boson operators,  for the right ($R$) and left ($L$) wedges respectively: $\{b_{\Om}^{R},{b_{\Om}^{R}}^{\dg}\}$ and
$\{b_{\Om}^{L},{b_{\Om}^{L}}^{\dg}\}$ that separately satisfy the commutation relations. In contrast to the Minkowski operator set, they annihilate the Rindler vacuum. Because the field operators define the vacuum (and as a consequence the notion of particle), the Minkowski and Rindler observer
cannot agree on the particle content of their respective vacuum. The field operators $\{d_{\Om},d^{\dg}_{\Om}\}$ are usually called (perhaps confusingly) the Unruh modes, in spite of defining the particle content in Minkowski, not in Rindler, spacetime. Unruh modes have convenient algebraic properties~\cite{unruh1976notes}: for example,  each Rindler mode is related to only two Unruh modes labeled by $\pm\Om$ and the transformation~(\ref{eq:SymplecticTrans}) belongs to the real symplectic group $Sp(2,\bbR)$.

As previously mentioned, the full set of Rindler modes is supported on both space-time wedges but the
accelerating observer can access only the set of modes in his ``own" wedge, because he is causally disconnected from the opposing wedge. As a consequence, the dynamics in each of the wedges is quantum-mechanically incoherent: the spacetime that is inaccessible to the Rindler observer has to be traced over so that he effectively obtains a mixed state. Of course, this does not imply that unitarity is lost in the process. To see why, it is helpful to invoke the machinery of quantum information
theory applied to the Unruh channel. We can construct the channel by switching from the Heisenberg to the
Schr\"odinger picture. There, Eq.~(\ref{eq:SymplecticTrans}) takes the following form~\cite{takagi1986vacuum}:
\begin{equation}  \label{eq:MinkRinassignment}
\ket{n}\mapsto \prod_{\Om}{\frac{1}{\cosh^{1+n}{r_\Om}}} \sum_{m=0}^{\infty}\binom{n+m}{n}^{1/2}\tanh^{m}{r_{\Om}}\ket{n+m}_{\Om,L}\ket{m}_{\Om,R}.
\end{equation}
Eq.~(\ref{eq:MinkRinassignment}) relates the particle content using Minkowski modes (an $n$-particle state, left-hand-side) to the particle content in Rindler spacetime (right-hand-side). Because the overall multi-mode state Eq.~(\ref{eq:MinkRinassignment}) is a product state, we can without loss of generality focus on a single mode labeled by $\Om$, while assuming high and low frequency cut-offs to be present (see~\cite{takagi1986vacuum}).
Strictly speaking, each term labeled $\Om$ in Eq.~(\ref{eq:MinkRinassignment}) is referred to as a ``two-mode" state in quantum optics,  but as the factorization occurs for different modes labeled by $\Om$ rather than $(\Om,L)$ and $(\Om,R)$, we will refer to each term here as a the single mode $\Om$.

As a consequence, we can (in analogy with Eq.~(\ref{eq:HawkIso})) rewrite the transformation Eq.~(\ref{eq:MinkRinassignment}) in terms of the action of a single mode isometry $V_\Om^{A\to BE}$~\cite{bradler2012rindler}
\begin{equation}\label{eq:UnruhUnitaryeff}
V_\Om^{A\to BE}\big(\ket{n}_{A}\big)
={\frac{1}{\cosh^{1+n}{r_{\Om}}}}\sum_{m=0}^{\infty}\binom{n+m}{n}^{1/2}\tanh^{m}{r_\Om}\ket{n+m}_{B}\ket{m}_{E},
\end{equation}
where $A$ is the sender system, $B$ (the receiver system) is identified with the left Rindler wedge, and the environment (or reference, $E$) lies beyond the Rindler observer's horizon. The Unruh channel is defined by the isometry $V_\Om^{A\rightarrow BE}$, giving rise to a quantum channel
by tracing over the reference system $E$.

Let us now assume that Alice has at her disposal two modes labelled
$A_{1}$ and $A_{2}$ so she can prepare an arbitrary qubit in a Hilbert
space $\bbC^{2}$ spanned by $\{\ket{01}_{A_{1}A_{2}},\ket{10}_{A_{1}A_{2}}\}\equiv\{\ket{01}_{A},\ket{10}_{A}\}$
(for a qudit generalization see~\cite{bradler2012rindler}). In that case the isometry becomes
\begin{equation}\label{eq:UnruhDoubleIso}
V_\Om^{A_{1}\to B_{1}E_{1}}\otimes V_\Om^{A_{2}\to B_{2}E_{2}}=V_\Om^{A_{12}\to B_{12}E_{12}}.
\end{equation}
so that the qubit Unruh channel is defined as $\N\df\Tr{E_{12}}\big[V_\Om^{A_{12}\to B_{12}E_{12}}\big]$.
As shown in~\cite{bradler2011infinite}, the channel output can be written as
\begin{equation}
\N=\bigoplus_{\ell=1}^{\infty}p_{\ell}\,\N_{\ell},
\label{eq:UnruhOutput}
\end{equation}
where
$$
p_{\ell}=\frac12(1-z)^{3}\ell(\ell+1)z^{\ell-1}
$$
and $z=\tanh^{2}{r_{\Om}}\equiv\exp{(-{2\pi\om/a})}$  so  that $\sum_{\ell=1}^{\infty}p_{\ell}=1$. Eq.~(\ref{eq:UnruhOutput}) is usually termed a ``direct sum channel", ``probabilistic mixture of channels" or ``orthogonal convex sum channel"~\cite{holevo2012quantum}.

Let us investigate the structure of the channel's output Hilbert space $B_{12}$. For the purpose of encoding quantum information we will find it advantageous to switch from this bosonic Fock space to a bipartite, infinite-dimensional abstract Hilbert space isomorphic to a Hilbert space $\ell_2$ that we define below.
The Hilbert space $B_{12}$ is a bipartite space spanned by $\{\ket{m}_{B_1}\ket{n}_{B_2}\}_{m,n=0}^\infty$. A closer examination of the channel output (states from the Hilbert space $B_{12}$) reveals~\cite{bradler2011infinite,bradler2012rindler} that the channel output is actually confined to a completely symmetric subspace of $B_{12}$. This subspace has a direct sum structure $B_{12}=\bigoplus_{\ell=1}^\infty B_{12}^{(\ell)}$ where $\dim{B_{12}^{(\ell)}}=\ell+1$. Since the subspaces $B_{12}^{(\ell)}$ are completely symmetric, their spanning basis vectors are the set $\{\ket{m}_{B_1}\ket{\ell-m}_{B_2}\}_{m=0}^\ell$. The Hilbert space $B_{12}^{(\ell)}$ is bipartite but because the information is dual-rail encoded (see the qubit input Hilbert space and its spanning basis above Eq.~(\ref{eq:UnruhDoubleIso})), we must ignore its bipartite structure and we write the basis states simply as $\{\ket{m,\ell-m}_B\}_{m=0}^\ell$.  It is this completely symmetric subspace of $B_{12}$ that is isomorphic to the Hilbert space we will use to encode quantum information. This space is spanned by the standard basis  $e_n=(\dots,0,1,0,\dots)$ with zeros everywhere except for a single 1 in the $n$-th place. We will denote such a ``logical" basis state\footnote{The ``logical" basis is that which is used to encode quantum logic. For example, the logical zero could be encoded as $|0\ra=|01\ra_A$ and the logical one as $|1\ra=|10\ra_A$.}
by $\ket{n-1}$. With this identification, the output of the black hole channel $\N$ is just as in the Fock-space formalism, but the interpretation of the ket vectors is different: the set $\{\ket{n}\}_{n=0}^\infty$ simply denotes the basis of $\ell_2$.

\subsection*{Cloning channels}

Remarkably, the output states of $\N_{\ell}$  in Eq.~(\ref{eq:UnruhOutput}) give rise to prominent channels in quantum information theory called $1\to\ell$ cloning channels $\Cl{\ell}$ (see also~\cite{AdamiVersteeg2006}) because they yield $\ell$ approximate (identical) clones of an unknown input qubit (for $\ell=1$ the map is just an identity)~\cite{bradler2011infinite}. Cloning channels provide the best solution to the problem of cloning an unknown qubit, to a level allowed by the laws of quantum mechanics~\cite{wootters1982single,buvzek1996quantum,gisin1997optimal}. The quality of the clones is measured by a suitable figure of merit (the fidelity between an input state and one of the clones).  This family of channels is ubiquitous in quantum physics~\cite{mandel1983photon,scarani2005quantum} and  also played an important role recently in the proof of the generalized Wehrl conjecture~\cite{lieb2012proof}. In the light of Hayden and Preskill's speculation about a trade-off between the capacity of a black hole to clone or to destroy quantum information~\cite{hayden2007black}, it seems particularly opportune to study the properties of such channels. Note that quantum cloning channels do not  clone quantum states perfectly: the clones they produce are mixed states that are approximations of the original quantum state, with a fidelity $F$ that can be as high as $F=5/6$ for the optimal $1\to 2$ cloner~\cite{gisin1997optimal}. The literature also distinguishes clones from anti-clones~\cite{Buzeketal1999,CerfFiurasek2006}. The latter are the complex conjugate of the clones, and information-theoretically simply represent the best possible approximation of the orthogonal complement of a given pure state~\cite{Buzeketal1999}. In general, a cloning machine that attempts to create $\ell$ copies from $n$ inputs creates $n$ (approximate) clones
and $\ell-n$ anti-clones. \label{cloningdisc}

In the following we discuss the structure of these optimal cloning channels. One way to represent a quantum channel is by calculating the output density matrix of the channel~\cite{nielsen2010quantum}. We start with
\begin{equation}\label{eq:cloningoutput}
  \Cl{\ell}(\phi)=\frac2{\ell(\ell+1)}\bigg({\ell\over2}\id_{\ell+1}+\sum_{i=x,y,z}n_iJ^{(\ell+1)}_i\bigg),
\end{equation}
where $J^{(\ell+1)}_i$ is the $(\ell+1)$-dimensional representation of the generators of the $su(2)$ algebra (satisfying $\big[J^{(\ell+1)}_i,J^{(\ell+1)}_j\big]_-=i\e_{ijk}J^{(\ell+1)}_k$) and $\id_{\ell+1}$ is an identity matrix of dimension $\ell+1$.
For a single qubit in a dual-rail encoding
\begin{equation}\label{eq:dualrailInput}
\ket{\phi}_A=a\ket{01}_{A}+b\ket{10}_A
\end{equation}
and using  $\Cl{1}(\phi)=\phi$ for the $\ell=1$ channel,
we find $n_x=\bar ab+a\bar b,n_y=i(-\bar ab+a\bar b),n_z=|a|^2-|b|^2$, where the overbar denotes complex conjugation. It is remarkable that the coefficients $n_i$ do not depend on the representation $\ell+1$ of the generators of $su(2)$, that is, all cloning channels are described by the same coefficients $n_i$ irrespective of the number of clones produced. For all $\ell>1$ we just use the appropriate generators in~(\ref{eq:cloningoutput}) and obtain the corresponding output state forming the black hole channel as a convex combination of the cloners $\Cl{\ell}$. Before studying the structure of these channels in more detail, we need to discuss the concept of ``degradability".

\subsection*{Degradable channels}
Cloning channels are an important example of a type of quantum channel termed {\em degradable}.
The concept of degradability was introduced in~\cite{devetak2005capa} and studied in~\cite{cubitt2008structures}. To understand degradability we have to first define the {\em complementary channel} to a quantum
channel $\N$. Recall that in order to define a quantum channel $\N$ we start with the channel isometric extension $V_\Om^{A\to BE}$ and trace over the reference system $E$. The complementary channel $\wh\N$ to $\N$ is obtained  by tracing over the output system $B$ instead of the
reference~\cite{holevo2012quantum}: $\wh\N\df\Tr{B}\big[V_\Om^{A\to BE}\big]$. This is sufficient for the definition of degradability. A channel $\N$ is degradable if there exists another quantum channel $\D$ such that $\wh\N=\D\circ\N$. Then, $\D$ is called a {\em degrading} map.

If a channel is degradable, its quantum capacity (Eq.~(\ref{eq:QuantCap})) reduces to Eq.~(\ref{eq:OptCohInfo}), that is, calculating the capacity becomes a computationally tractable problem. It is not immediately obvious why for degradable channels the quantum (and other) capacities are calculable: they simply satisfy a certain entropic inequality which makes it possible to prove the inequality for the optimized coherent information:
\be
Q^{(1)}(\N\otimes\N)\leq 2Q^{(1)}(\N)\;.
\ee
Because at the same time $Q^{(1)}(\N\otimes\N)\geq 2Q^{(1)}(\N)$ follows from the definition (recall the optimization in Eq.~(\ref{eq:OptCohInfo})) we can conclude that degradable quantum channels have {\em additive} coherent information and therefore  $Q(\N)=Q^{(1)}(\N)$ from Eq.~(\ref{eq:QuantCap}). It is now known that qubit cloning channels are degradable~\cite{bradler2010conjugate} and that therefore their quantum capacity can be calculated: $Q(\Cl{\ell})=\log_{2}{\frac{\ell+1}{\ell}}$. Once we know the quantum capacity of $\Cl{\ell}$ for all $\ell$ we
can easily find the capacity of the Unruh channel~Eq.~(\ref{eq:UnruhOutput}). As shown in Lemma~\ref{lem:directsum} (see Appendix), the quantum capacity of a probabilistic
mixture of quantum channels is a probabilistic mixture of quantum capacities
(cf. Eq.~(\ref{eq:QdirectSumChannel}))
\begin{equation}
Q(\N)={\frac{(1-z)^{3}}{2}}\sum_{\ell=1}^{\infty}\ell(\ell+1)z^{\ell-1}\log_{2}
{\frac{\ell+1}{\ell}}.  \label{eq:UnruhQCap}
\end{equation}
where $z=\tanh^2{r_\Om}$. This result coincides with the derivation from~\cite{bradler2010conjugate}.

Curiously, it was not degradability but rather {\em conjugate degradability} that led to the result (\ref{eq:UnruhQCap}). Conjugate degradable channels are channels degradable up to complex conjugation (transposition of a density matrix in a given basis) and additivity of the optimized coherent information holds for conjugate degradable channels as well~\cite{bradler2010conjugate}. Conjugation is a nontrivial modification of the degradability criterion because transposition is not a quantum channel (it is a positive, but not completely positive map~\cite{nielsen2010quantum}). Cloning channels are both degradable and conjugate-degradable but the relationship between the two families of channels remains an open problem. For example, it is not clear whether there exist conjugate-degradable channels that are not degradable.

\section{Late-time quantum information interacting with a black hole}
\label{sec:Sorkin}
In this section we assume Alice, the sender of quantum information, to be poised just outside an already formed black hole and imagine her tossing the quantum message into the waiting abyss. Is there a model describing this kind of late-time interaction between quantum information and an already formed black hole? Indeed, this situation has been treated before~\cite{sorkin1987simplified} (see~\cite{AdamiVersteeg2013} for further analysis). Sorkin describes the interaction between a black hole formed by gravitational collapse and a late-time massless scalar quantum field. The late-time field's support is distinct from that of the Hawking radiation, which is red-shifted with respect to it.
As a consequence, we can study the interaction of late-time radiation with the black hole without mixing late-time with Hawking modes (their respective creation and annihilation operators commute).
\begin{figure}
  \resizebox{6cm}{!}{\includegraphics{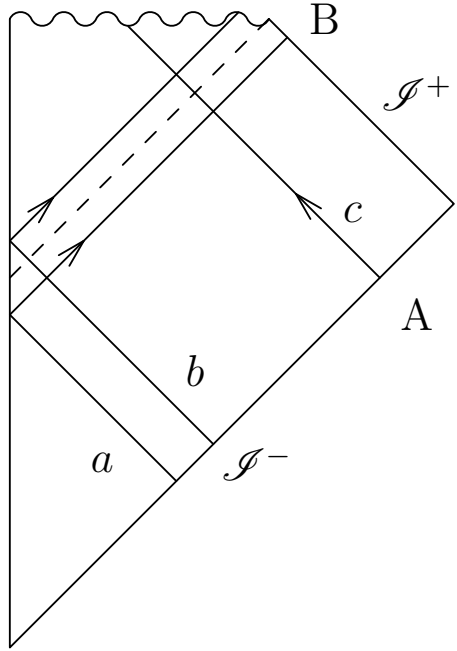}}
   \caption{Penrose diagram of an evaporating black hole formed by a gravitational collapse (ignoring back-reactions). The dashed line demarcates the horizon and the wavy boundary is the singularity. The interaction of the late-time incoming mode with the horizon is governed by Eq.~(\ref{eq:SorkinHamiltonian}). Alice (A) uses the late-time mode $c$ to send her quantum message inside the black hole. Bob (B) collects the radiation at $\scri^+$.}
    \label{fig:confdiag}
\end{figure}
The late-time isometry is derived from the following Bogoliubov transformation relating the input and output Hilbert spaces:
\begin{equation}\label{eq:SorkinHamiltonian}
  a_{\rm out}=\a a-\b b^\dg+\g c.
\end{equation}
Here, $a$ and $b$ annihilate early-time particles just outside and just inside the horizon as before, and $c$ annihilates late-time modes outside the horizon\footnote{Note that our notation differs from that of Sorkin~\cite{sorkin1987simplified}}. The coefficients  $\a,\b,\g\in\bbR$.

This mapping correctly predicts the effect of superradiance (an older term used for stimulated emission) even for a non-rotating black hole, reproducing earlier results~\cite{bekenstein1977einstein,panangaden1977probability}. Note that the superradiance discussed in~\cite{panangaden1977probability} is also relevant for non-rotating black holes~\cite{pananprivate}.
Using Sorkin's insight, the black hole can now be understood as a quantum system perturbed by incoming late-time radiation, whose response is outgoing Hawking-like radiation (schematically depicted in Fig.~\ref{fig:confdiag}). While we focus here on two special cases in which $\a=0$ (perfectly reflecting black hole) and $\a=1$ (perfectly absorbing black hole), a general analysis of the interaction described by Eq.~(\ref{eq:SorkinHamiltonian}) was presented elsewhere~\cite{bradler2014black}.

\subsection*{Perfectly reflecting black hole}

For $\a=0$ in Eq.~(\ref{eq:SorkinHamiltonian}), it turns out that the resulting isometry relating incoming and outgoing quantum states is just Eq.~(\ref{eq:HawkIs}),  where the late-time particles play the role of the early-time particles of the standard description, so that
$\tanh^2{r_\om}=\frac{\beta^2}{1+\beta^2}$ precisely like the standard black hole channel, but where now $\gamma^2=1+\beta^2$.

Clearly, this mapping is formally isomorphic to the Unruh isometry $V_\Om^{A\to BE}$ given by~Eq.~(\ref{eq:UnruhUnitaryeff}). Indeed, setting  $m=0$ in Eq.~(\ref{eq:UnruhUnitaryeff}) we can identify the left and right Rindler vacuum with the Boulware vacuum. We can further map $r_\Om$ (which depends on the proper acceleration $a$ of the receiver) to $r_\om$ (defined by the black hole's surface gravity $\kappa$) and the equivalence is exact\footnote{The isometry Eq.~(\ref{eq:UnruhUnitaryeff}) first appeared in the black hole context in~\cite{wald1976stimulated}. The mapping describes the stimulated emission of radiation in response to the events that formed the black hole. However, in order to be able to observe the stimulated radiation at future null infinity, the
quantum states that formed the black hole would have to be extremely energetic -- in fact transplanckian -- due to the gravitational redshift. Moreover, the stimulated emission effect is transient, and soon after stimulation the outgoing radiation becomes thermal again. Late-time quantum states do not suffer from this dramatic red shift, and the stimulated emission from late-time particles incident on a black hole should be readily observable at future null infinity}.

The realization that the black hole channel is the Unruh channel (even though only in the limit of a perfectly reflecting black hole) opens the door to study quantitatively the quantum information transmission properties of a black hole. Since the Unruh isometry gives rise to the Unruh channel whose quantum information properties are well understood~\cite{bradler2010conjugate,bradler2009private}, we can immediately transfer the properties of the Unruh channel to the black hole scenario in the perfectly reflecting black hole case. Using a dual-rail encoding of quantum information as in Eq.~(\ref{eq:UnruhDoubleIso}) we define
\begin{equation}\label{eq:HawkDoubleIso}
V_\om^{A_{1}\rightarrow B_{1}E_{1}}\otimes V_\om^{A_{2}\rightarrow B_{2}E_{2}}=V_\om^{A_{12}\rightarrow B_{12}E_{12}},
\end{equation}
where $V_\om^{A_{1}\rightarrow B_{1}E_{1}}$ is the isometry Eq.~(\ref{eq:HawkIso}). By tracing over the reference Hilbert space $E_{12}$, we obtain the {\em  black hole channel} whose form is identical to the Unruh channel Eq.~(\ref{eq:UnruhOutput}):
\begin{equation}\label{eq:BHchannel}
\N=\bigoplus_{\ell=1}^\infty p_\ell\ \Cl{\ell},
\end{equation}
so that the properties of the black hole channel are identical to the qubit Unruh channel, including the classical and quantum capacity~\cite{bradler2010conjugate,bradler2011infinite,bradler2012rindler}. The quantum capacity of the black hole channel $\N$ is therefore identical to Eq.~(\ref{eq:UnruhQCap}) and is depicted in Fig.~\ref{fig:BHcapacity}.
\begin{figure}[t]
    \begin{center}
        \resizebox{11cm}{6.8cm}{\includegraphics{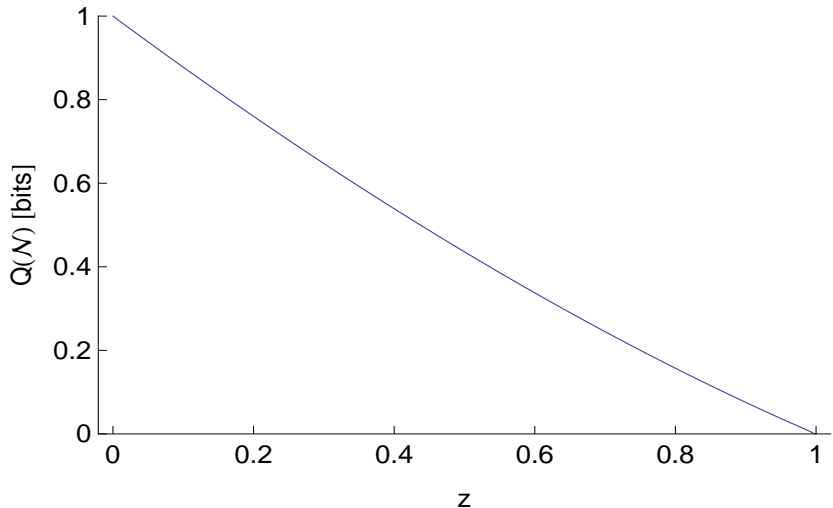}}
    \caption{The quantum capacity of the black hole channel Eq.~(\ref{eq:BHchannel}) where $z=\tanh^2{r_\om}$.}
    \label{fig:BHcapacity}
    \end{center}
\end{figure}
We note here that because we omitted the gravitational redshift in this derivation, the black hole channel for the case of perfect reflection of late-time quantum states is formally identical to the original Hawking channel given by the transformation
\be
a_{\rm out}=\alpha a-\beta b^\dagger\;, \label{hawking}
\ee
describing early-time modes. That the latter channel describes perfect reflection is obvious as particles sent towards the black hole in mode $a$ always remain in mode $a$, as these modes are defined precisely as those that travel towards future null infinity just outside the event horizon.  While Hawking was able to introduce grey-body factors using a transformation of the type (\ref{hawking}), we remind the reader that Bekenstein and Meisels~{\cite{bekenstein1977einstein} have shown that this form is not consistent with Einstein's formulation of the quantum theory of radiation~\cite{Einstein1917}. It was shown later by Adami and ver Steeg~\cite{AdamiVersteeg2013} that Sorkin's formulation with an arbitrary $\alpha$ gives rise to Hawking's result (including grey-body factors)
 \be
 \alpha^2=\frac{\Gamma}{1-e^{-\omega/T}}
 \ee
where $\Gamma$ is the black hole absorptivity and $T=\frac{\kappa}{2\pi}$ is the Hawking temperature, but in a thermodynamically consistent manner where $\Gamma$ is strictly smaller than 1, in perfect accord with~\cite{bekenstein1977einstein}. If the gravitational red shift is taken into account, the early-time and late-time isometries will be different, of course.

\subsection*{Entanglement-breaking channels}
One of the remarkable properties of the Unruh channel -- and by analogy therefore also of the perfectly reflecting black hole channel -- is the fact that its complementary channel is {\em entanglement-breaking}~\cite{bradler2011infinite}.
The complementary channel $\wh\N$ is the channel that relays information to the environment rather than the receiver. Its explicit form is known as well~\cite{bradler2011infinite}:
$$
\wh{\N}=\bigoplus_{\ell=1}^\infty p_\ell\whCl{\ell},
$$
where the $\whCl{\ell}$ denotes the complementary channel to the cloning channel $\Cl{\ell}$. Explicitly, we have
\begin{equation}\label{eq:BHchannelComplement}
  \whCl{\ell}(\phi)={2\over\ell(\ell+1)}\bigg({\ell+1\over2}\id_{\ell}+\sum_{i=x,y,z}m_iJ^{(\ell)}_i\bigg)
\end{equation}
and $m_x=n_x,m_y=-n_y$ and $m_z=n_z$ where $n_i$ has been defined in Eq.~(\ref{eq:cloningoutput}). For the black hole channel, the complementary channel connects Alice to the inside of the black hole. A quantum channel $\M$ is entanglement-breaking if $\vr_{AB}=(\id_A\otimes\M_B)(\vp_{AB})$ is separable for all entangled  bipartite states $\vp_{AB}$~\cite{horodecki2009quantum}. It is possible to prove rigorously that if a channel is entanglement-breaking, then its quantum capacity must be zero~\cite{hayden2008decoupling,klesse2008random}.
Thus, no quantum information can reliably be sent through an entanglement-breaking channel\footnote{The opposite implication does {\em not} hold: if a channel has zero capacity, it is not necessarily entanglement breaking.}.

The result we just derived, namely that no quantum information can enter a perfectly reflecting black hole, is satisfactory since at least for the external observer the only place that quantum information can go is in front of the horizon.  Furthermore, if the capacity to reconstruct quantum states perfectly outside the black hole is non-vanishing, we should expect that the quantum state inside of the horizon (which consists of anti-clones $\whCl{\ell}$ of the quantum state reflected on the outside~\cite{AdamiVersteeg2006}) cannot be used to reconstruct the quantum state, so that the no-cloning theorem is inviolate. This is precisely what we find.

What is the interpretation of the black hole channel capacity depicted in Fig.~\ref{fig:BHcapacity}? First of all, we see that the quantum capacity is nonzero for all values of the surface gravity  $\kappa=-2\pi\om/\log_2{z}$ except when $\kappa\to\infty$. The latter corresponds to the final stages of evaporation (microscopic black holes) which we will ignore due to the near certain breakdown of the semiclassical description. As long as the capacity stays nonzero (however small), the near-perfect transmission of quantum information is possible in the sense discussed earlier. One just has to wait longer since a lower quantum capacity implies a lower rate of quantum information transmission.  As the surface gravity increases so does the temperature, and the thermal Hawking background renders the transmission more noisy. Thus as long as the transmission rate decreases but stays nonzero, perfect quantum transmission is possible during all stages of the evaporation process until the semiclassical transformation~(\ref{eq:SorkinHamiltonian}) ceases to be a good description of the interaction. We also note that if stimulated emission is neglected (as in the original formulation of Hawking), then quantum information cannot be reconstructed even for the perfectly reflecting black hole as we have argued below Eq.~(\ref{eq:HawkDensityMatrix}).

Stimulated emission is ubiquitous in nature and appears in many elementary quantum systems. Mandel~\cite{mandel1983photon} presented a simple Hamiltonian describing an interaction of a two-level atom and a photon in an unknown polarization state. In today's language the photon is a polarization qubit and the Hamiltonian induces a completely positive map that turns out to be the optimal $1\to2$ cloner $\Cl{2}$.  The presence of a cloning transformation  suggests that a black hole is instead a rather ordinary quantum object (at least when it is still macroscopic).

Still, a reader might complain that a perfectly reflecting black hole does not correspond to a physical black hole. In the next section we will treat therefore the opposite extreme: a perfectly absorbing black hole from which no quantum states can be reflected. We focus on these extreme cases because only for those are single-letter quantum capacities calculable at the moment.

\subsection*{Perfectly absorbing black hole}

For a perfectly absorbing black hole, the Bogoliubov transformation (\ref{eq:SorkinHamiltonian}) can be implemented
by the Hamiltonian
\begin{equation}\label{sorkinham}
H_S = ig_\omega(a^\dg b^\dg-ab+a^\dg c-a c^\dg)
\end{equation}
so that
\begin{equation}
 a_{out}=e^{-iH_S}ae^{iH_S}=a- \gom(b^\dg+c).
\end{equation}
This corresponds to $\a=1$ in Eq.~(\ref{eq:SorkinHamiltonian}) which implies $\g^2=\b^2\equiv \gom^2$. Note that the
interaction between $c$-modes and $a$-modes in Eq.~(\ref{sorkinham}) takes the form of an ordinary beam splitter in quantum optics~\cite{Leonhardt2003}, and provides a way to describe the interaction of the black hole with radiation~\cite{AdamiVersteeg2006}.

In order to discover the nature of the  quantum channel that corresponds to the full absorption case, we will follow the same strategy as for the Unruh channel that described the perfectly reflecting black hole. Let us define an isometry
\begin{equation}\label{eq:fullAbsIso}
  V_\om\df\exp{(-i H_S)}.
\end{equation}
The input quantum information will be dual-rail encoded in the late-time mode $c$ and the output of the black hole channel will be collected by an outside observer  in  mode $a_{out}$. Hence we need an explicit action of $V_\omega$ for $n=0,1$. A tedious but straightforward calculation~\cite{AdamiVersteeg2013} leads to
\begin{equation}\label{eq:fullAbsIsoOnZero}
  V_\om\ket{000}_{abc}=\frac2{2+\gom^2}\sum_{n=0}^\infty\sum_{k=0}^{n}A^{n-k}B^k\sqrt{\binom{n}{k}}\ket{n-k}_a \ket{n}_b\ket{k}_c
\end{equation}
and
\begin{align}\label{eq:fullAbsIsoOnOne}
   V_\om\ket{001}_{abc} & ={\left(\frac2{2+\gom^2}\right)^2}\sum_{n=0}^\infty\sum_{k=0}^{n+1}A^{n-k}B^k\sqrt{\binom{n}{k}} \nonumber\\
  &\times \big(\sqrt{k+1}\ket{n-k}_a\ket{n}_b\ket{k+1}_c+ \gom\sqrt{n-k+1}\ket{n-k+1}_a\ket{n}_b\ket{k}_c\big),
\end{align}
where
\be
A=\frac{2\gom}{2+\gom^2}
\ee
and
\be
B=-\frac{\gom^2}{2+\gom^2}.
\ee
As before, the subscripts $a,b,c$ refer to the early-time modes ($a$ and $b$) and late-time modes ($c$) respectively. We can now define the dual-rail isometry as in the perfectly reflecting case:
$$
V_\om^{A_1\to B_1E_1}\otimes V_\om^{A_2\to B_2E_2}=V_\om^{A_{12}\to B_{12}E_{12}},
$$
where  the channel input system $A$ denotes the $c$ mode and carries the late-time quantum message encoded as a dual-rail qubit $\phi_A$ as in Eq.~(\ref{eq:dualrailInput}), the channel output by $B_{12}$ corresponds to the $a$ mode and the channel complementary output $E_{12}$ is the $b$ and $c$ mode. After tracing over the $B$ subsystem we finally obtain the corresponding black hole channel for full absorption:
\begin{equation}\label{eq:FullAbsChannel}
  \M=\bigoplus_{\ell=1}^\infty p_\ell\,\D_\ell
\end{equation}
for some probability distribution $p_\ell$. The channel has  a block-diagonal structure but it is a very different channel from the perfectly reflecting case Eq.~(\ref{eq:BHchannel}). The first output given by $\D_1$ (discussed below) reveals that this channel is a qubit depolarizing channel~\cite{king2003capacity}.

The family of depolarizing channels is a well studied topic in quantum Shannon theory~\cite{king2003capacity,nielsen2010quantum}}. A  qubit depolarizing channel can be expressed as
\begin{equation}\label{eq:DepChannel}
  \D_1(\vr)=(1-q)\vr+{q\over2}\id_2
\end{equation}
defined for $0\leq q\leq4/3$. The properties of the depolarizing channel crucially depend on $q$. For the perfectly absorbing  black hole we find $q=2/3$, which implies that the channel is entanglement-breaking. We can see this by noting that the partial transpose of the output density matrix of $\vr_{AB}=(\id_A\otimes\D_{1,B})(\Phi_{AB})$ is positive definite, where $\Phi_{AB}$ is a maximally entangled state. This is a necessary and sufficient condition for the output state to be separable~\cite{nielsen2010quantum} and so the channel is entanglement-breaking~\cite{horodecki2009quantum}.
This has profound consequences for the entire black hole channel $\M$ -- it must also be entanglement-breaking~\cite{bradler2011infinite}. Indeed, we first observe that
\begin{equation}
  \D_1(\phi)={1\over2}\id_2+\sum_{i=x,y,z}k_iJ_i^{(2)},
\end{equation}
where
\begin{subequations}
\begin{align}
  k_x &= (q-1)(a\bar b+\bar ab), \\
  k_y &= (q-1)i(a\bar b-\bar ab), \\
  k_z &= (q-1)(|a|^2-|b|^2),
\end{align}
\end{subequations}
and $J_i^{(2)}$ are the  generators of the fundamental representation of the $su(2)$ algebra. But similarly to the Unruh channel structure, all the blocks in Eq.~(\ref{eq:FullAbsChannel}) can be written as
\begin{equation}\label{eq:depoloutput}
  \D_{\ell}(\phi)={2\over\ell(\ell+1)}\bigg({\ell\over2}\id_{\ell+1}+\sum_{i=x,y,z}k_iJ^{(\ell+1)}_i\bigg).
\end{equation}
Following the argument in~\cite{bradler2011infinite} we conclude that all $\D_\ell$ are entanglement-breaking and so is the channel $\M$.

We thus arrive at the conclusion that the black hole channel (for the case of perfect absorption) is entanglement-breaking, and therefore  the quantum capacity vanishes. This observation has a number of interesting consequences. First of all, being unable to transmit entanglement across a quantum depolarizing channel whose fidelity is below a critical level is not a violation of any law of physics. The real question concerns the fate of this quantum information after full evaporation of the black hole.

\section{Discussion}
\label{sec:concl}

Black holes are quantum objects: Hawking's derivation of the radiation effect has surely taught us this much. But what kind of objects are they when considered in light of quantum information theory, and in particular quantum Shannon theory? Here we argue that black holes act as a depolarizing medium that, depending on the reflectivity of the black hole, can obstruct the perfect reconstruction of quantum states sent through the channel.
We have derived the quantum capacity for the transmission of entanglement via black holes in two important limiting cases: perfectly reflecting black holes (which can be seen as white holes~\cite{BirrellDavies1982}), and perfectly absorbing black holes. Radiation impinging on a perfectly reflecting black hole creates two  clones of the quantum information on the outside and a single anti-clone\footnote{We remind the reader that the terms ``clone" and ``anti-clone" refer to approximations of the original quantum states (as described on p.~\pageref{cloningdisc}).} traveling towards the black hole singularity. We show that Bob can reconstruct Alice's quantum state (that is, obtain the same entanglement with respect to a reference state that Alice had) with perfect accuracy by suitably acting on the clones (following from the non-vanishing capacity of the Unruh channel), while it is not possible to reconstruct the quantum state perfectly from the anti-clone behind the horizon, because that channel turns out to be entanglement-breaking (with zero quantum capacity). As a by-product, the no-cloning theorem is inviolate. We note that, viewed from the {\em inside} of the black hole, the channel is perfectly reflecting: no quantum information can leak outside.

The black hole channel for perfect absorption appears to be complementary to the perfect reflection channel. We show that the capacity to reconstruct quantum information outside of the black hole vanishes: the channel is entanglement breaking. At the same time, the absorbed quantum states can be used to perfectly reconstruct entanglement behind the horizon, as now there are {\em two clones} (actually a clone and an anti-clone) of the quantum state behind the horizon, but only a single (approximate) clone in front of the horizon.
\section{Conclusions}

The consequences of this analysis of black holes in terms of quantum channel theory are manifold. We separated the notions of quantum information loss from the breakdown of quantum mechanics during the black hole evolution. This has been achieved by pointing at the precise meaning of quantum information as established in quantum information (Shannon)  theory. This field also reminded us that the unitary evolution of an open system is not the most general dynamics allowed by the laws of quantum mechanics. We argue that  black holes are open quantum systems and as such are allowed to evolve from pure states into mixed states. This helps us in describing the black hole as a quantum channel (completely positive map) and to finally calculate its quantum capacity. The latter quantifies the amount of quantum information that can be transmitted through a black hole as understood in current quantum information theory. We can go even further: since quantum channels with zero quantum channel capacity are certainly allowed, we could end up in a situation where quantum information is simply lost. Whether this happens depends on the ultimate fate of black holes (which is not known at present). Even if quantum information turns out to be ultimately unrecoverable (in the perfectly absorbing scenario), such
a loss of quantum information does not violate any known law. We note, however, that the present quantum information theoretic treatment guarantees that the no-cloning theorem of quantum mechanics is respected, something that cannot be guaranteed in scenarios such as black hole complementarity~\cite{susskind1993stretched}.

Perhaps a closer look at black hole complementarity is warranted, then. Susskind et al.~\cite{susskind1993stretched} propose to solve the information paradox by positing that quantum information can both be reflected from {\em and} transmitted through the event horizon, and that the concomitant violation of the no-cloning theorem is averted simply because it is not possible to ascertain such a violation experimentally. What we show here is that (at least for the case of perfect absorption and perfect reflection), quantum information is not both reflected and transmitted. While in the perfectly reflecting channel an anti-clone is indeed transmitted into the black hole (while the quantum information is reflected), the anti-clone is insufficient to resurrect the quantum state, so that the no-cloning theorem is not violated.

On the other hand, if the channel is perfectly absorbing then a clone is indeed ``reflected" (stimulated), but again is insufficient to reconstruct the quantum state, while the quantum state and its anti-clone have disappeared behind the horizon, carrying with them Alice's quantum entanglement. Thus, there is a complementarity within quantum black hole channels, but it is perfectly in accord with our laws of physics, unlike the quantum-information-theoretically naive interpretation of Ref.~\cite{susskind1993stretched}. While we have only demonstrated this complementarity here for the two extreme cases of the black channel, we expect that it holds in the most general case.

\section*{Acknowledgements}
KB enjoyed discussions with Luigi Gallo about observational aspects of black holes.

\bibliographystyle{unsrt}


\section*{APPENDIX}
\label{sec:appendix}

\renewcommand{\theequation}{A.\arabic{equation}}
\setcounter{equation}{0}

Here we calculate the quantum capacity of a direct sum channel given the quantum capacity of the summands.

\begin{lem}
\label{lem:directsum} Consider quantum channels $\N_{1},\N_{2}$ and $\T$ and suppose $Q^{(1)}(\N_{i}\otimes\T)\leq Q^{(1)}(\mathcal{N}_{i})+Q^{(1)}(\mathcal{T})$,
where $Q^{(1)}$ is the optimized coherent information. Then
\begin{equation}\label{eq:inass}
Q^{(1)}((p_{1}\N_{1}\oplus p_{2}\N_{2})\otimes\T)
\leq Q^{(1)}(p_{1}\N_{1}\oplus p_{2}\N_{2})+Q^{(1)}(\T),
\end{equation}
where $p_{1}+p_{2}=1$ and $\N=p_{1}\N_{1}\oplus p_{2}\N_{2}$ is a direct sum channel whose output is a
classical-quantum state.
\end{lem}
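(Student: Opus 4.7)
My plan is to exploit the classical-quantum output structure of the direct-sum channel $\N=p_{1}\N_{1}\oplus p_{2}\N_{2}$ via its Stinespring dilation. Specifically, $\N$ admits an isometric extension of the form
$$V_{\N}\ket{\psi}_A=\sum_i\sqrt{p_i}\,\ket{i}_{F_B}\ket{i}_{F_E}\,V_{\N_i}\ket{\psi}_A,$$
which installs perfectly correlated classical flag registers $F_B,F_E$ on Bob's side and on the environment. For any pure input $\ket{\phi}_{RAA'}$ to $\N\otimes\T$, I would apply $V_\N\otimes V_\T$ and expand the output entropies $H(B)$ and $H(E)$ using the chain rule conditioned on the flags. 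The Shannon contributions $H(F_B)=H(F_E)=H(\{p_i\})$ cancel exactly in the difference, producing the central decomposition
$$I_c(\phi;\N\otimes\T)=\sum_i p_i\,I_c(\phi;\N_i\otimes\T),$$
and the same argument with $\T$ absent gives $I_c(\phi_{RA};\N)=\sum_i p_i\,I_c(\phi_{RA};\N_i)$.

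With the decomposition in hand, I would apply the hypothesis $I_c(\phi;\N_i\otimes\T)\leq Q^{(1)}(\N_i)+Q^{(1)}(\T)$ termwise, average against the weights $p_i$, and maximize over $\phi$ to conclude
$$Q^{(1)}(\N\otimes\T)\leq \sum_i p_i Q^{(1)}(\N_i)+Q^{(1)}(\T).$$
To match the lemma statement one then identifies $\sum_i p_i Q^{(1)}(\N_i)$ with $Q^{(1)}(\N)$. The direction $Q^{(1)}(\N)\leq \sum_i p_i Q^{(1)}(\N_i)$ follows immediately from the flag decomposition (the maximum of a convex combination is at most the convex combination of maxima), while the reverse direction is obtained by evaluating $I_c(\phi^\ast;\N)$ at a state $\phi^\ast$ that simultaneously maximizes the coherent information of every branch---the situation of the intended application, where each $\N_i=\Cl{\ell}$ is a cloning channel and all of them are optimized by the common dual-rail maximally entangled input.

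The main obstacle is precisely this last common-optimizer step. Without it, the argument establishes only the weaker bound $Q^{(1)}(\N\otimes\T)\leq\sum_i p_i Q^{(1)}(\N_i)+Q^{(1)}(\T)$, which need not equal $Q^{(1)}(\N)+Q^{(1)}(\T)$ when the branch optimizers differ. A slicker route that bypasses this subtlety would be to establish the pointwise strengthening $I_c(\phi;\N_i\otimes\T)\leq I_c(\phi_{RA};\N_i)+Q^{(1)}(\T)$ directly and then average, immediately giving $I_c(\phi;\N\otimes\T)\leq I_c(\phi_{RA};\N)+Q^{(1)}(\T)\leq Q^{(1)}(\N)+Q^{(1)}(\T)$. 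Such a pointwise bound does not follow from the stated hypothesis in general, but it is available when the branch channels are degradable---which is exactly the setting of the cloning channels that motivate the lemma, so the argument closes in the case of interest.
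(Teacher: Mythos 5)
Your argument is essentially the paper's own proof: the flagged isometric extension gives the decomposition $I_c(\phi;\N\otimes\T)=p_1I_c(\phi;\N_1\otimes\T)+p_2I_c(\phi;\N_2\otimes\T)$ after the $H(\{p_1,p_2\})$ contributions cancel (this is exactly the step Eq.~(\ref{eq:directsumQ14})), the hypothesis is then applied branch by branch, and the resulting bound $p_1Q^{(1)}(\N_1)+p_2Q^{(1)}(\N_2)+Q^{(1)}(\T)$ is re-identified with $Q^{(1)}(\N)+Q^{(1)}(\T)$. The ``obstacle'' you flag at that last step is real, but it is not a defect of your attempt relative to the paper: the printed proof performs the same identification in the lines following Eq.~(\ref{eq:directsumQ17}), equating $p_1Q^{(1)}(\N_1)+p_2Q^{(1)}(\N_2)$ with $\max_{\phi'}\big[p_1I_c(\phi';\N_1)+p_2I_c(\phi';\N_2)\big]=Q^{(1)}(\N)$ over ``convenient states $\phi'$'', whereas in general one only has $\max_{\phi'}\big[p_1I_c(\phi';\N_1)+p_2I_c(\phi';\N_2)\big]\leq p_1Q^{(1)}(\N_1)+p_2Q^{(1)}(\N_2)$, which is the unhelpful direction. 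Both proofs therefore implicitly need either a common branch optimizer or a stronger branchwise property such as degradability, which yields the pointwise bound $I_c(\phi;\N_i\otimes\T)\leq I_c(\phi_{RA};\N_i)+Q^{(1)}(\T)$ you mention~\cite{devetak2005capa}. In the intended application, Eq.~(\ref{eq:BHchannel}), the blocks $\Cl{\ell}$ are degradable and covariant, so the maximally entangled dual-rail input optimizes every branch simultaneously and the identification---hence Eq.~(\ref{eq:QdirectSumChannel})---goes through. In short, your proposal follows the paper's route, and by making the common-optimizer/degradability requirement explicit it is in fact more careful than the published argument.
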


\begin{proof}
We denote $\vr_{BC}=(\N\otimes\T)(\phi)=\sum_{x=1,2}\kbr{x}{x}\otimes\vr_{B_xC}$, where $B=B_1\oplus B_2$. First note that the complementary channel of  a direct sum channel $\N=p_1\N_1\oplus p_2\N_2$ can be written as a direct sum of the complementary channel  $\wh\N_1$ and $\wh\N_2$ with the same probabilities (the caret denotes the complementary channel or subsystem). This immediately follows from the purification of the classical-quantum state $\vr_{BC}$. We can then write the complementary channel output as another classical-quantum state $\s_{\wh B\wh C}=(\wh\N\otimes\wh\T)(\phi)=\sum_{x=1,2}\kbr{x}{x}\otimes\s_{\wh B_x\wh C}$, where $\wh B=\wh B_1\oplus\wh B_2$. It follows that:
\begin{subequations}
\begin{align}
Q^{(1)}(\N\otimes\T) & = \max_\phi{\big[I_c(\N\otimes\T,\vr_{BC})\big]} \\
& = \max_\phi{\big[I_c((p_1\N_1\oplus p_2\N_2)\otimes\T,\vr_{BC})\big]} \\
& = \max_\phi{\big[H(BC)_\vr-H(\wh B\wh C)_\s\big]} \\
& = \max_\phi{\big[H(\{p_1,p_2\})+p_1H(B_1C)_\vr+p_2H(B_2C)_\vr} \label{eq:directsumQ14}  \\
& \ \ \ \ \ \ -(H(\{p_1,p_2\})+p_1H(\wh B_1\wh C)_\s+p_2H(\wh B_2\wh C)_\s)  \big] \\
& = \max_\phi{\big[p_1(H(B_1C)_\vr-H(\wh B_1\wh C)_\s)+p_2(H(B_2C)_\vr}-H(\wh B_2\wh C)_\s)  \big] \\
& = p_1Q^{(1)}(\N_1\otimes\T)+p_2Q^{(1)}(\N_2\otimes\T)\\
& \leq p_1Q^{(1)}(\N_1)+p_2Q^{(1)}(\N_2)+Q^{(1)}(\T) \label{eq:directsumQ17} \\
& = \max_{\phi'}{\big[H(\{p_1,p_2\})+p_1H(B_1)_{\vr'}+p_2H(B_2)_{\vr'}} \\
& \ \ \ \ \ \ -(H(\{p_1,p_2\})+p_1H(\wh B_1)_{\s'}+p_2H(\wh B_2)_{\s'})  \big] + Q^{(1)}(\T) \\
& = \max_{\phi'}{\big[H(B)_{\vr'}-H(\wh B)_{\s'}\big]} + Q^{(1)}(\T) \\
& = Q^{(1)}(p_1\N_1\oplus p_2\N_2)+Q^{(1)}(\T)\nn.
\end{align}
\end{subequations}
The first three lines follow directly from the definition of the optimized coherent information. The first crucial equality is Eq.~(\ref{eq:directsumQ14}) where we used the direct sum structure of the output state occupying the $B$ subsystem. In~Eq.~(\ref{eq:directsumQ17}) we used the initial assumption Eq.~(\ref{eq:inass}) and the rest is again just the definition of the optimized coherent information. Note that the states $\phi'$ and $\vr',\s'$ are convenient states not necessarily related to $\phi,\vr,\s$.
\end{proof}
In particular, if $\T=\N=p_{1}\N_{1}\oplus p_{2}\N_{2}$ we have shown that the quantum capacity of
the quantum channel $\N$ reads
\[
Q(p_{1}\mathcal{N}_{1}\oplus p_{2}\mathcal{N}_{2})=Q^{(1)}(p_{1}
\mathcal{N}_{1}\oplus p_{2}\mathcal{N}_{2})=p_{1}Q^{(1)}(\mathcal{N}
_{1})+p_{2}Q^{(1)}(\mathcal{N}_{2})=p_{1}Q(\mathcal{N}_{1})+p_{2}
Q(\mathcal{N}_{2}).
\]
A simple inductive argument yields the following expression
\begin{equation}
Q\bigg(\bigoplus_{i=1}^{n}p_{i}\mathcal{N}_{i}\bigg)=\sum_{i=1}^{n}
p_{i}Q(\mathcal{N}_{i}), \label{eq:QdirectSumChannel}
\end{equation}
valid for $n>0$ assuming $\sum_{i=1}^{n} p_{i}=1$ (including $n\to\infty$).

Lemma~\ref{lem:directsum} shows how to calculate the quantum capacity of a direct sum channel such as the black hole channel in $\N$~Eq.~(\ref{eq:BHchannel}).

\end{document}